%
%
\documentclass[fleqn,12pt,twoside]{article}


\usepackage[headings]{espcrc1}
\readRCS $Id: espcrc1.tex,v 1.2 2004/02/24 11:22:11 spepping Exp $
\ProvidesFile{espcrc1.tex}[\filedate \space v\fileversion
     \space Elsevier 1-column CRC Author Instructions]


\usepackage{graphicx}
\usepackage[figuresright]{rotating}

\newtheorem{theorem}{Theorem}

\newtheorem{claim}{Claim}

\newtheorem{corollary}{Corollary}

\newtheorem{lemma}{Lemma}

\newtheorem{remark}{Remark}

\newenvironment{proof}[1][Proof.]{\begin{trivlist}
\item[\hskip \labelsep {\bfseries #1}]}{\end{trivlist}}

\newcommand{\AmS}{{\protect\the\textfont2
  A\kern-.1667em\lower.5ex\hbox{M}\kern-.125emS}}

\hyphenation{author another created financial paper re-commend-ed
Post-Script}

\usepackage{amsmath}
\usepackage{amsfonts}
\title{On upper bounds for parameters related to construction of special maximum matchings
}

\author{Artur Khojabaghyan\address[MCSD]{Department of Informatics and Applied Mathematics,\\
Yerevan State University, Yerevan, 0025, Armenia}
\thanks{email: arturkhojabaghyan@gmail.com}
                                and
        Vahan V. Mkrtchyan\addressmark[MCSD]
        \address{Institute for Informatics and Automation Problems,\\
National Academy of Sciences of Republic of Armenia, 0014, Armenia}
\thanks{email: vahanmkrtchyan2002@\{ysu.am, ipia.sci.am,
yahoo.com\}}}


\runtitle{On upper bounds for parameters related to construction of special maximum matchings}
\runauthor{Artur Khojabaghyan, Vahan V. Mkrtchyan}

\begin{document}

\maketitle

\begin{abstract}
For a graph $G$ let $L(G)$ and $l(G)$ denote the size of the largest and
smallest maximum matching of a graph obtained from $G$ by removing a
maximum matching of $G$. We show that $L(G)\leq 2l(G),$ and $L(G)\leq \frac{3}{2}%
l(G)$ provided that $G$ contains a perfect matching. We also characterize
the class of graphs for which $L(G)=2l(G)$. Our characterization implies the
existence of a polynomial algorithm for testing the property $L(G)=2l(G)$.
Finally we show that it is $NP$-complete to test whether a graph $G$
containing a perfect matching satisfies $L(G)=\frac{3}{2}l(G)$.
\end{abstract}

\section{Introduction}

In the paper graphs are assumed to be finite, undirected, without loops or
multiple edges. Let $V(G)$ and $E(G)$ denote the sets of vertices and edges
of a graph $G$, respectively. If $v\in V(G)$ and $e\in E(G)$, then $e$ is said to cover $v$ if $e$ is incident to $v$. For $V'\subseteq V(G)$ and $E'\subseteq E(G)$ let $G\backslash V'$ and $G\backslash E'$ denote the graphs obtained from $G$ by removing $V'$ and $E'$, respectively. Moreover, let $V(E')$ denote the set of vertices of $G$ that are covered by an edge from $E'$. A subgraph $H$ of $G$ is said to be spanning for $G$, if $V(E(H))=V(G)$.

The length of a path (cycle) is the number of its edges. A $k$-path ($k$-cycle) is a
path (cycle) of length $k$. A $3$-cycle is called a triangle.

A set $V'\subseteq V(G)$ ($E'\subseteq E(G)$) is said to be independent, if $V'$ ($E'$) contains no adjacent vertices (edges). An independent set of edges is called matching. A matching of $G$ is called perfect, if it covers all vertices of $G$. Let $\nu (G)$ denote the
cardinality of a largest matching of $G$. A matching of $G$ is maximum, if it contains $\nu (G)$ edges.

For a positive integer $k$ and a matching $M$ of $G$, a $(2k-1)$-path $P$ is called $M$-augmenting, if the $2^{nd}$, $4^{th}$, $6^{th}$,..., $(2k-2)^{th}$ edges of $P$ belong to $M$, while the endvertices of $P$ are not covered by an edge of $M$. The following theorem of Berge gives a sufficient and necessary condition for a matching to be maximum:

\begin{theorem} (Berge \cite{Harary}) A matching $M$ of $G$ is maximum, if $G$ contains no $M$-augmenting path.
\end{theorem}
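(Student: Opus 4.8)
The plan is to prove the contrapositive of the stated implication: if $M$ is not maximum, then $G$ contains an $M$-augmenting path. So I would begin by assuming $M$ is not maximum and fixing some matching $M'$ of $G$ with $|M'|>|M|$; any larger matching will do, for instance a maximum one. The central object is the symmetric difference $H=M\triangle M'$, viewed as the spanning subgraph of $G$ whose edge set consists of those edges lying in exactly one of $M$ and $M'$.

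Next I would analyze the structure of $H$. Since $M$ and $M'$ are each matchings, every vertex of $G$ is incident to at most one edge of $M$ and at most one edge of $M'$, hence to at most two edges of $H$. Consequently every connected component of $H$ is either a simple path or a (necessarily even) cycle, and along any such component the edges alternate between $M$ and $M'$. In each cycle the number of $M$-edges equals the number of $M'$-edges, whereas in a path these two counts differ by at most one.

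Because $|M'|>|M|$, summing the edge counts over all components forces at least one component to contain strictly more $M'$-edges than $M$-edges; by the structural analysis this component must be a path $P$ that both begins and ends with an $M'$-edge, and therefore has odd length $2k-1$ with its $2^{nd},4^{th},\dots,(2k-2)^{th}$ edges lying in $M$. To conclude that $P$ meets the paper's definition of an $M$-augmenting path I must still verify that its two endpoints are uncovered by $M$, and this is the step I expect to require the most care. For an endpoint $v$, the unique edge of $H$ incident to $v$ is an $M'$-edge; suppose toward a contradiction that $v$ is also covered by some $M$-edge $e$. If $e\notin M'$ then $e\in H$, so $v$ would have degree two in $H$, contradicting that $v$ is an endpoint of $P$; if instead $e\in M'$ then $v$ would be covered by two distinct $M'$-edges, contradicting that $M'$ is a matching. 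Hence $v$ is uncovered by $M$, so $P$ is $M$-augmenting, and the contrapositive — and therefore the theorem — follows.
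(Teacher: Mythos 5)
Your proposal is correct: it is the classical symmetric-difference argument for Berge's theorem, including the careful verification that the endpoints of the excess-$M'$ path are uncovered by $M$, which is exactly the point needing care. Note that the paper itself offers no proof of this statement --- it is quoted as a known result with a citation to Harary --- so there is no internal proof to compare against; your argument is the standard one and fills that gap correctly.
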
   

For two matchings $M$ and $M'$ of $G$ consider the subgraph $H$ of $G$, where $V(H)=V(M\triangle M')$ and $E(H)=M\triangle M'$. The connected components of $H$ are called $M\triangle M'$-alternating components. Note that $M\triangle M'$ alternating components are always paths or cycles of even length. For a graph $G$ define:

\begin{center}
$L(G)\equiv \max \{\nu (G\backslash F):F$ is a maximum matching of $G\},$

$l(G)\equiv \min \{\nu (G\backslash F):F$ is a maximum matching of $G\}.$
\end{center}

It is known that $L(G)$ and $l(G)$ are $NP$-hard calculable even for
connected bipartite graphs $G$ with maximum degree three \cite{Complexity},
though there are polynomial algorithms which construct a maximum matching $F$
of a tree $G$ such that $\nu (G\backslash F)=L(G)$ and $\nu (G\backslash
F)=l(G)$ (to be presented in \cite{Algorithm}).

In the same paper \cite{Algorithm} it is shown that $L(G)\leq 2l(G).$ In the
present paper we re-prove this equality, and also show that $L(G)\leq \frac{3%
}{2}l(G)$ provided that $G$ contains a perfect matching. 

A naturally arising question is the characterization of graphs $G$ with $L(G)= 2l(G)$ and the graphs $G$ with a perfect matching that satisfy $L(G)= \frac{3}{2}l(G)$. In this paper we solve these problems by giving a characterization of graphs $G$ with $L(G)= 2l(G)$ that implies the existence of a polynomial algorithm for testing this property, and by showing that it is $NP$-complete to test whether a bridgeless cubic graph $G$ satisfies $L(G)=\frac{3}{2}l(G)$. Recall that by Petersen theorem any bridgeless cubic graph contains a perfect matching (see, for example, theorem 3.4.1 of \cite{Lov}).

Terms and concepts that we do not define can be found in \cite%
{Diestel,Harary,Lov,West}.

\section{Some auxiliarly results}

We will need the following:

\begin{theorem}\label{Ratios} Let $G$ be a graph. Then:

\begin{enumerate}
\item[(a)]for any two maximum matchings $F,F'$ of $G$, we have $\nu (G\backslash F')\leq 2\nu (G\backslash F)$;

\item[(b)] $L(G)\leq 2l(G)$;

\item[(c)] If $L(G)=2l(G)$, $F_{L},F_{l}$
are two maximum matchings of the graph $G$ with $\nu (G\backslash
F_{L})=L(G), $ $\nu (G\backslash F_{l})=l(G)$, and $H_{L}$ is
\textbf{any} maximum matching of the graph $G\backslash F_{L}$, then:
\begin{enumerate}
\item[(c1)] $F_{l}\backslash F_{L}\subset H_{L};$

\item[(c2)] $H_{L}\backslash F_{l}$ is a maximum matching of $G\backslash
F_{l}$;

\item[(c3)] $F_{L}\backslash F_{l}$ is a maximum matching of $G\backslash
F_{l}$;
\end{enumerate}

\item[(d)] if $G$ contains a perfect matching, then $L(G)\leq \frac32l(G)$.
\end{enumerate}

\end{theorem}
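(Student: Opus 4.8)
The plan is to build everything on one elementary bookkeeping fact: if $F,F'$ are maximum matchings and $H'$ is any matching with $H'\cap F'=\emptyset$, then every edge of $H'\cap F$ lies in $F\setminus F'$ (being in $F$, and being in $H'$ it avoids $F'$). For part (a) I would take $H'$ to be a maximum matching of $G\setminus F'$ and split $H'=(H'\setminus F)\cup(H'\cap F)$. Here $H'\setminus F$ is a matching inside $G\setminus F$, so $|H'\setminus F|\le\nu(G\setminus F)$; and $H'\cap F\subseteq F\setminus F'$ gives $|H'\cap F|\le|F\setminus F'|=|F'\setminus F|$, the last equality holding because $F,F'$ have equal cardinality. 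Since $F'\setminus F$ is itself a matching contained in $G\setminus F$, also $|F'\setminus F|\le\nu(G\setminus F)$. Adding the two bounds gives $\nu(G\setminus F')=|H'|\le 2\nu(G\setminus F)$, which is (a); part (b) is the case where $F'$ realizes $L(G)$ and $F$ realizes $l(G)$.

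For (c) I would run exactly this chain with $F=F_l$, $F'=F_L$, $H'=H_L$, and read off what equality forces. The hypothesis $L(G)=2l(G)$ collapses $|H_L|\le\nu(G\setminus F_l)+|F_L\setminus F_l|\le 2l(G)$ into equalities at every step: $|H_L\setminus F_l|=\nu(G\setminus F_l)$ is exactly (c2); $|F_L\setminus F_l|=\nu(G\setminus F_l)$ is exactly (c3); and $|H_L\cap F_l|=|F_l\setminus F_L|$ together with the inclusion $H_L\cap F_l\subseteq F_l\setminus F_L$ forces $F_l\setminus F_L\subseteq H_L$, which is (c1). So (c) needs no new idea beyond tracking the slack in (a).

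For (d) the extra input is that a graph with a perfect matching has \emph{all} maximum matchings perfect, so $F_L,F_l$ are both perfect and $D:=F_L\triangle F_l$ consists only of even cycles. Consequently $F_L\setminus F_l$ (call its members $L$-edges) restricts to a perfect matching of $V(D)$: every vertex of $D$ meets exactly one $L$-edge. I would then study the auxiliary subgraph $H_L\cup(F_L\setminus F_l)$, which has maximum degree two and hence decomposes into paths and cycles alternating between $H_L$-edges and $L$-edges. The only edges of $H_L$ useless for building a matching of $G\setminus F_l$ are those in $H_L\cap F_l$; since $H_L\cap F_L=\emptyset$ these lie in $F_l\setminus F_L$, i.e.\ they are edges of $D$. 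The crucial observation is that such an edge has \emph{both} endpoints in $V(D)$, each already carrying an $L$-edge, so it is an interior edge of its component and can never be an end-edge of a path, while any $H_L$-edge appearing as an end-edge meets a vertex outside $V(D)$ and therefore lies in $G\setminus F_l$.

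With this in hand I would bound component by component. On each component take all its $L$-edges, a legitimate matching in $G\setminus F_l$: in an alternating path or cycle carrying $h$ edges of $H_L$ this yields at least $h-1$ edges ($\ge\frac23 h$ once $h\ge 3$), and cycles give $h$. The only small components needing care are paths with one or two $H_L$-edges whose ends are $H_L$-edges; there the end-edges are usable, and taking the two end $H_L$-edges settles these with ratio $\ge\frac23$ as well. Summing produces $M\subseteq G\setminus F_l$ with $|M|\ge\frac23|H_L|=\frac23 L(G)$, whence $l(G)\ge\frac23 L(G)$, i.e.\ $L(G)\le\frac32 l(G)$. I expect the main obstacle to be precisely the structural lemma of the previous paragraph: pinning down that the lost edges $H_L\cap F_l$ are forced into the interiors of the alternating components, since it is this position fact, special to the perfect-matching case where $D$ has no path components, that upgrades the ratio from the $\frac12$ one might fear to $\frac23$. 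Verifying the small-$h$ paths and checking that the prism $C_3\times K_2$ makes the bound tight are the remaining routine checks.
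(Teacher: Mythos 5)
Your proof is correct. Parts (a)--(c) are essentially identical to the paper's: the same decomposition $|H'|=|H'\cap F|+|H'\setminus F|$ with the bounds $|H'\cap F|\le|F\setminus F'|=|F'\setminus F|\le\nu(G\setminus F)$, and (c1)--(c3) read off by forcing equality in that chain, so there is nothing to compare there. Part (d), however, takes a genuinely different route. The paper argues globally: setting $k=|H_L\cap F_l|$, letting $X$ be the set of $F_L$-edges both of whose endpoints meet edges of $H_L\cap F_l$, and $x=|X|$, it exhibits two matchings of $G\setminus F_l$ --- namely $(H_L\setminus F_l)\cup X$, of size $L(G)-k+x$, and the set of $F_L$-edges covering a vertex of $V(H_L\cap F_l)\setminus V(X)$, of size $2k-2x$ --- and then checks that $\max\{L(G)-k+x,\,2k-2x\}\ge\frac{2}{3}L(G)$ always holds. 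You instead decompose $H_L\cup(F_L\setminus F_l)$ into alternating paths and cycles, prove the positional lemma that every edge of $H_L\cap F_l$ is interior to its component (both endpoints already carry an edge of $F_L\setminus F_l$, by perfectness of $F_L$), and make a per-component choice --- all $L$-edges in general, the end $H_L$-edges on the short bad paths --- achieving ratio $\frac{2}{3}$ on every component separately. Both proofs invoke perfectness at exactly the same spot, to cover the endpoints of $H_L\cap F_l$ by $F_L$-edges that avoid $F_l$; the difference is that the paper's counting is shorter and replaces your case analysis by one algebraic dichotomy on $x$, while your component-wise argument is more structural: it localizes precisely where the ratio $\frac{2}{3}$ is attained (the $H$-$L$-$H$-$L$-$H$ path pattern), makes tightness examples such as the prism transparent, and in principle gives a per-component refinement of the paper's global bound. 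Your handling of the small components ($h=1,2$ paths with $H_L$-edges at both ends), which is the only place the end-edge lemma is truly needed, is sound.
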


\begin{proof}(a)Let $H'$ be any maximum matching in the graph $G\backslash F'$. Then:%
\begin{eqnarray*}
\nu (G\backslash F') =|H'|=|H'\cap F|+|H'\backslash F|\leq |F\backslash F'|+\nu(G\backslash F)=
|F'\backslash F|+\nu (G\backslash F)\leq 2\nu (G\backslash F).
\end{eqnarray*}

(b) follows from (a).

(c) Consider the proof of (a) and take $F'=F_{L}$, $H'=H_L$ and $F=F_l$. Since $L(G)=2l(G)$,
we must have equalities throughout, thus properties (c1)-(c3) should be true.

(d) Let $F_{L},F_{l}$ be two perfect matchings of the graph $G$ with $\nu
(G\backslash F_{L})=L(G),$ $\nu (G\backslash F_{l})=l(G)$, and assume $H_{L}$
to be a maximum matching of the graph $G\backslash F_{L}$. Define:
\begin{eqnarray*}
X =\{e=(u,v)\in F_{L}:u\text{ and }v\text{ are incident to an edge from }H_{L}\cap F_{l}\}, \\
x =|X|, k=|H_{L}\cap F_{l}|;
\end{eqnarray*}

Clearly, $(H_{L}\backslash F_{l})\cup X$ is a matching of the graph $G\backslash F_{l}$, therefore, taking into account that $(H_{L}\backslash F_{l})\cap X=\emptyset,$ we deduce
\begin{eqnarray*}
l(G) =\nu (G\backslash F_{l})\geq |H_{L}\backslash
F_{l}| +|X| =|H_{L}|
-|H_{L}\cap F_{l}| +|X| =L(G)-k+x.
\end{eqnarray*}

Since $F_{L}$ is a perfect matching, it covers the set $V(H_{L}\cap
F_{l})\backslash V(X)$, which contains%
\begin{equation*}
|V(H_{L}\cap F_{l})\backslash V(X)| =2|
(H_{L}\cap F_{l})| -2|X| =2k-2x
\end{equation*}%
vertices. Define the set $E_{F_{L}}$ as follows:
\begin{equation*}
E_{F_{L}}=\{e\in F_{L}:e\text{ covers a vertex from }V(H_{L}\cap
F_{l})\backslash V(X)\}.
\end{equation*}%

Clearly, $E_{F_{L}}$is a matching of $G\backslash F_{l}$, too, and therefore
\begin{equation*}
l(G)=\nu (G\backslash F_{l})\geq |E_{F_{L}}|=2k-2x.
\end{equation*}%

Let us show that
\begin{equation*}
\max \{L(G)-k+x,2k-2x\}\geq \frac{2L(G)}{3}.
\end{equation*}%
Note that \\
if $x\geq k-\frac{L(G)}{3}$ then $L(G)-k+x\geq L(G)-k+k-\frac{L(G)}{3}=\frac{2L(G)}{3}$; \\
if $x\leq k-\frac{L(G)}{3}$ then $2k-2x\geq \frac{2L(G)}{3}$, \\
thus in both cases we have $l(G)\geq \frac{2L(G)}{3}$ or
\begin{equation*}
\frac{L(G)}{l(G)}\leq \frac{3}{2}.
\end{equation*}

The proof of the theorem \ref{Ratios} is completed.
$\square$
\end{proof}

\begin{lemma}
\label{1-1Case} (Lemma 2.20, 2.41 of \cite{Algorithm}) Let $G$ be a graph,
and assume that $u$ and $v$ are vertices of degree one sharing a neighbour $w\in V(G)$. Then:
\begin{equation*}
L(G)=L(G\backslash \{u,v,w\})+1,l(G)=l(G\backslash \{u,v,w\})+1.
\end{equation*}
\end{lemma}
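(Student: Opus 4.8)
The plan is to reduce both equalities to a single structural correspondence between the maximum matchings of $G$ and those of $G' \equiv G\backslash\{u,v,w\}$. First I would record two preliminary facts. Since $u$ and $v$ have degree one with unique neighbour $w$, every edge of $G$ meeting $\{u,v,w\}$ is incident to $w$; hence any matching of $G$ contains at most one such edge. Deleting it from an arbitrary matching of $G$ yields a matching of $G'$, while adjoining $uw$ to a maximum matching of $G'$ yields a matching of $G$; together these give $\nu(G)=\nu(G')+1$. Moreover, $w$ is covered by every maximum matching $F$ of $G$: otherwise $u$ and $v$ would also be uncovered (their only neighbour is $w$), and $F\cup\{uw\}$ would be a strictly larger matching.

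The heart of the argument is the following claim: for every maximum matching $F$ of $G$, the set $F' \equiv F\cap E(G')$ is a maximum matching of $G'$ and $\nu(G\backslash F)=\nu(G'\backslash F')+1$; conversely, every maximum matching $F'$ of $G'$ arises this way, since $F'\cup\{uw\}$ is a maximum matching of $G$ whose restriction to $E(G')$ is $F'$. For the first part, because $w$ is covered and $u,v$ have degree one, the unique $F$-edge meeting $\{u,v,w\}$ is exactly what is discarded in passing to $F'$, so $|F'|=\nu(G)-1=\nu(G')$ and $F'$ is maximum. For the equality on $\nu$, I would prove two inequalities. For the upper bound, take any matching $M$ of $G\backslash F$; at most one edge of $M$ meets $\{u,v,w\}$ (all such edges pass through $w$), and deleting it leaves a subset of $E(G')\backslash F'$, that is, a matching of $G'\backslash F'$, so $|M|\le \nu(G'\backslash F')+1$. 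For the lower bound, note that at least one of $uw,vw$ lies outside $F$ (as $F$ uses at most one edge at $w$); appending that edge to a maximum matching of $G'\backslash F'$ produces a matching of $G\backslash F$ of size $\nu(G'\backslash F')+1$.

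Granting the claim, both equalities follow at once. As $F$ ranges over the maximum matchings of $G$, the restriction $F'=F\cap E(G')$ ranges over exactly the maximum matchings of $G'$ (one containment is the first part of the claim, the other is the converse), and $\nu(G\backslash F)=\nu(G'\backslash F')+1$ throughout. Taking the maximum over all such $F$ therefore yields $L(G)=L(G')+1$, and taking the minimum yields $l(G)=l(G')+1$.

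I expect the main obstacle to be the upper bound in the $\nu$-equality. The subtlety is that in $G\backslash F$ the vertex $w$ is \emph{not} separated from $G'$: only the single $F$-edge incident to $w$ is deleted, so $w$ may still be joined to many vertices of $V(G')$, and a matching of $G\backslash F$ could a priori exploit such an edge $wz$. The clean resolution is the observation that every edge meeting $\{u,v,w\}$ necessarily passes through $w$ (since $u,v$ have degree one), so no matter how $w$ is matched, a matching of $G\backslash F$ contains at most one edge touching $\{u,v,w\}$; discarding it lands the remainder inside $G'\backslash F'$. Once this observation is in place, the remaining verifications are routine bookkeeping.
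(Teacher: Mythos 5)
Your proof is correct. Note first that this paper does not actually prove the lemma: it is imported verbatim from Lemma 2.20 and 2.41 of the cited reference \cite{Algorithm}, so there is no in-paper argument to compare yours against. On its own merits, your argument is sound and self-contained. The key structural observation --- that every edge of $G$ meeting $\{u,v,w\}$ passes through $w$, so any matching of $G$ (and any matching of $G\backslash F$) uses at most one such edge --- is exactly what makes the correspondence $F\mapsto F'=F\cap E(G')$ work in both directions: it gives $\nu(G)=\nu(G')+1$, shows that restriction sends maximum matchings of $G$ onto maximum matchings of $G'=G\backslash\{u,v,w\}$ (surjectivity via $F'\cup\{uw\}$), and yields $\nu(G\backslash F)=\nu(G'\backslash F')+1$, with the lower bound using that at least one of $uw,vw$ survives in $G\backslash F$. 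Since the correspondence is onto the set of maximum matchings of $G'$ and shifts the value $\nu(G\backslash F)$ by exactly $1$, taking the maximum and the minimum over it gives $L(G)=L(G')+1$ and $l(G)=l(G')+1$ simultaneously --- a nice economy, since separate ad hoc arguments for $L$ and $l$ are avoided. You also correctly flagged and resolved the one genuine subtlety: in $G\backslash F$ the vertex $w$ is not isolated from $G'$, but any matching of $G\backslash F$ still touches $\{u,v,w\}$ in at most one edge, so discarding that edge lands inside $G'\backslash F'$.
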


\begin{proof}The proofs of these two equalities are similar, thus we will stop only on the proof of the first one. Our proof is based on the ideas of \cite{Algorithm}.

First of all, observe that $\nu(G)=\nu(G\backslash \{u,v,w\})+1$. Let us show that $L(G)\geq L(G\backslash \{u,v,w\})+1$. Take any maximum matching $F$ of $G\backslash \{u,v,w\}$ with $\nu((G\backslash \{u,v,w\})\backslash F)=L(G\backslash \{u,v,w\})$, and let $H$ be any maximum matching of $(G\backslash \{u,v,w\})\backslash F$. Define:
\begin{equation*}
F'=F\cup \{(u,w)\}, H'=H\cup \{(v,w)\}.
\end{equation*}
Observe that $F'$ is a maximum matching of $G$, and $H'$ is a matching of $G\backslash F'$. Thus,
\begin{equation*}
L(G)\geq \nu(G\backslash F')\geq |H'|=1+|H|=1+\nu((G\backslash \{u,v,w\})\backslash F)=1+L(G\backslash \{u,v,w\}).
\end{equation*}

To complete the proof of the first equality, it suffices to show that $L(G)\leq L(G\backslash \{u,v,w\})+1$. First of all, let us show that there is a maximum matching $F'$ of $G$ with $\nu(G\backslash F')=L(G)$, such that $F'$ contains one of the edges $(u,w)$ and $(v,w)$.

Take any maximum matching $F'$ of $G$ with $\nu(G\backslash F')=L(G)$, and assume that $F'\cap \{(u,w), (v,w)\}=\emptyset$. Morever, let $H'$ be a maximum matching of $G\backslash F'$. Since $F'$ is a maximum matching of $G$, $F'$ must contain an edge $(w,z)$, where $z\neq u,v$. Define:
\begin{equation*}
F''=\left\{
\begin{array}{ll}
(F'\backslash \{(w,z)\})\cup \{(w,u)\}, & \text{if } (w,u)\notin H;\\
(F'\backslash \{(w,z)\})\cup \{(w,v)\}, & \text{if } (w,u)\in H.
\end{array}
\right.
\end{equation*}        
Observe that $F''$ is a maximum matching of $G$, and $H'$ is a matching of $G\backslash F''$. Thus,
\begin{equation*}
\nu(G\backslash F'')\geq |H'|=\nu(G\backslash F')=L(G).
\end{equation*}
The last inequality implies that $\nu(G\backslash F'')=L(G)$. Moreover, $F''\cap \{(u,w), (v,w)\}\neq \emptyset$.

Thus, initially we can assume that $F'$ is a maximum matching of $G$ with $\nu(G\backslash F')=L(G)$, such that $F'$ contains one of the edges $(u,w)$ and $(v,w)$. Without loss of generality, we can also assume that this edge is $(u,w)$. Now, we claim that there is a maximum matching $H'$ of $G\backslash F'$ that contains the edge $(v,w)$.

Take any maximum matching $H'$ of $G\backslash F'$ and suppose that $(v,w)\notin H'$. Since $H'$ is a maximum matching of $G\backslash F'$, there must exist an edge $(w,z)\in H'$, where $z\neq u,v$. Define:
\begin{equation*}
H''=(H'\backslash \{(w,z)\})\cup \{(w,v)\}.
\end{equation*}
Observe that $H''$ is a maximum matching of $G\backslash F'$, since $|H''|=|H'|=\nu(G\backslash F')=L(G)$. Moreover, it contains the edge $(w,v)$. Thus, initially we can assume that $H'$ is a maximum matching of $G\backslash F'$ that contains the edge $(v,w)$.

We are ready to show that $L(G)\leq L(G\backslash \{u,v,w\})+1$. Since $\nu(G)=\nu(G\backslash \{u,v,w\})+1$, we have that $F'\backslash \{(u,w)\}$ is a maximum matching of $G\backslash \{u,v,w\}$. Taking into account that $H'\backslash \{(v,w)\}$ is a matching of $(G\backslash \{u,v,w\})\backslash (F'\backslash \{(u,w)\})$, we deduce:
\begin{eqnarray*}
L(G)=\nu(G\backslash F')=|H'|=1+|H'\backslash \{(v,w)\}|\leq 1+\nu((G\backslash \{u,v,w\})\backslash (F'\backslash \{(u,w)\}))\leq \\
 1+L(G\backslash \{u,v,w\}).
\end{eqnarray*}
$\square$
\end{proof}

\begin{corollary}\label{L=2l1-1case} Let $G$ be a graph with $L(G)=2l(G)$. Then there are no vertices $u,v$ of degree one, that are adjacent to the same vertex $w$.
\end{corollary}
\begin{proof}
Suppose not. Then lemma \ref{1-1Case} and (b) of theorem \ref{Ratios} imply
\begin{equation*}
L(G)=1+L(G-\{u,v,w\})\leq 1+2l(G-\{u,v,w\})=1+2(l(G)-1)<2l(G)
\end{equation*}
a contradiction.
$\square$
\end{proof}

\section{Characterization of graphs $G$ satisfying $L(G)=2l(G)$}

Let $T$ be the set of all triangles of $G$ that contain at least two vertices of degree two.
Note that any vertex of degree two lies in at most one triangle from $T$. From each triangle $t\in T$
choose a vertex $v_t$ of degree two, and define $V_1(G)$ as follows:
\begin{equation*}
V_1(G)=\{v:d_G(v)=1\}\cup \{v_t:t\in T\}
\end{equation*}

\begin{theorem}
\label{Characterization} Let $G$ be a connected graph with $|V(G)|\geq3$. Then $L(G)=2l(G)$ if and only if

\begin{description}
\item[(1)] $G\backslash V_1(G)$ is a bipartite graph with a bipartition $(X,Y)$;

\item[(2)] $|V_1(G)|=|Y|$ and any $y\in Y$ has exactly one neighbour in $V_1(G)$;

\item[(3)] the graph $G\backslash V_{1}(G)$ contains $|X|$ vertex disjoint $2$-paths.
\end{description}
\end{theorem}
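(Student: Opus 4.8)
The plan is to prove the two implications of the characterization separately. The easier direction is ``$\Leftarrow$'', which I would handle by exhibiting two explicit maximum matchings of $G$: one that realizes $l(G)$ and makes $\nu(G\backslash F)$ small, and one that realizes (a lower bound for) $L(G)$ and makes it large. The harder direction ``$\Rightarrow$'' I would handle by exploiting the rigidity forced by the equality case of Theorem \ref{Ratios}, especially part (c), to \emph{read off} the partition $V_1(G),X,Y$ directly from extremal matchings.

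For ``$\Leftarrow$'', first I would note that condition (3) is stronger than it looks: in the bipartite graph $G\backslash V_1(G)$ a $2$-path is centred either in $X$ or in $Y$, and a count (a path centred in $Y$ consumes two vertices of $X$) shows that $|X|$ \emph{vertex-disjoint} $2$-paths must all be centred in $X$, hence cover all of $X$ and $2|X|$ vertices of $Y$; in particular $|Y|\ge 2|X|$ and every $x\in X$ has two private neighbours $y_x,y_x'\in Y$. Using (1) and (2) I would check that $\nu(G)=|Y|$ (when $V_1(G)$ consists of degree-one vertices, $Y$ is a vertex cover of $G$), and that the matching $M_1$ pairing each $y\in Y$ with its unique neighbour in $V_1(G)$ is maximum. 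Taking $F_l:=M_1$ leaves exactly $X$ uncovered, and $G\backslash F_l$ retains only the bipartite $X$--$Y$ edges, so $\nu(G\backslash F_l)=|X|$ and hence $l(G)\le |X|$. For the lower bound on $L(G)$ I would take $F_L$ to consist of one edge $xy_x$ from each $2$-path together with the $M_1$-edge at every $y\in Y$ not of the form $y_x$; this is again a maximum matching of size $|Y|$, and in $G\backslash F_L$ each $2$-path still carries the edge $xy_x'$ while the private $V_1(G)$-neighbour of $y_x$ is now free, yielding two disjoint edges per $x$, so $\nu(G\backslash F_L)\ge 2|X|$ and $L(G)\ge 2|X|$. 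Combining with $L(G)\le 2l(G)$ from Theorem \ref{Ratios}(b) forces $l(G)=|X|$ and $L(G)=2l(G)$.

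For ``$\Rightarrow$'', I would fix $F_L,F_l$ and a maximum matching $H_L$ of $G\backslash F_L$ as in Theorem \ref{Ratios}(c), so that $F_l\backslash F_L\subset H_L$ and both $F_L\backslash F_l$ and $H_L\backslash F_l$ are maximum matchings of $G\backslash F_l$ of size $l(G)$. I would then set $X$ to be the set of vertices left uncovered by $F_l$ (these form an independent set, since an edge inside $X$ could be added to $F_l$, contradicting maximality), and argue that every edge of $F_l$ meets $V_1(G)$, so that $F_l$ is a perfect matching between $V_1(G)$ and the remaining vertices $Y$; this yields $|V_1(G)|=|Y|$. The two \emph{edge-disjoint} maximum matchings $F_L\backslash F_l$ and $H_L\backslash F_l$ of $G\backslash F_l$ have union of maximum degree two and, restricted to $X$, should supply the $|X|$ vertex-disjoint $2$-paths of (3). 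Finally Corollary \ref{L=2l1-1case}, together with the perfect matching $F_l$ between $V_1(G)$ and $Y$, should give that each $y\in Y$ has exactly one neighbour in $V_1(G)$, which is (2).

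The hard part is the structural bookkeeping inside ``$\Rightarrow$'': showing that $G\backslash V_1(G)$ is genuinely bipartite with the claimed parts. Concretely, I must rule out (i) edges of $F_l$ with \emph{both} endpoints outside $V_1(G)$, (ii) edges inside $Y$, and, more generally, any odd structure, since any of these would let me reroute $F_l$ or $F_L$ and strictly push the ratio $L/l$ below $2$, contradicting the equality. The natural tool is an alternating/augmenting-path argument along the components of $F_L\cup H_L$ and of $(F_L\backslash F_l)\cup(H_L\backslash F_l)$, forcing every relevant component to have length exactly two. I also expect the pendant-triangle vertices $v_t$ to need separate care: they are precisely the degree-two vertices that behave like pendants and must be absorbed into $V_1(G)$ for the count $|V_1(G)|=|Y|$ and the vertex-cover argument to come out right, so both directions will require verifying that a chosen $v_t$ plays the role of a private neighbour of its $Y$-endpoint exactly as a degree-one vertex does.
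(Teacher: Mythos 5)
Your sufficiency direction is essentially the paper's own argument: the counting observation that $|X|$ vertex-disjoint $2$-paths must all be centred in $X$, the matching $M_1$ pairing $Y$ with $V_1(G)$ (the paper's $F=F_1\cup F_2$), and the swapped matching $F_L$ leaving two disjoint edges per $x\in X$ (the paper's $F'$ built from the $4$-paths) are all there. One caveat: your justification of $\nu(G)=|Y|$ via ``$Y$ is a vertex cover'' works only when $V_1(G)$ consists of degree-one vertices; as soon as some triangle representative $v_t$ exists, the edge from $v_t$ to its $X$-neighbour is not covered by $Y$, and K\H{o}nig-type duality is unavailable anyway since $G$ is then not bipartite. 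The paper avoids this by checking directly, via Berge's theorem, that there is no $F$-augmenting path; you flag the issue but do not resolve it.

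The genuine gap is in the necessity direction. What you label ``structural bookkeeping'' is not a routine verification to be filled in later -- it is the entire technical content of the paper's proof, and your proposal only names the statements that would have to be proved. Concretely, to make your outline work one needs: that $V(F_L)\cup V(F_l)=V(G)$ (Claim \ref{SpanningSubgraph}); that every component of $F_L\triangle F_l$ is a $2$-path (Claim \ref{AltComp2paths}); the degree restrictions on vertices covered by only one of $F_L,F_l$ (Claim \ref{DegreeRequirements}); and the independence statements of Claim \ref{IndependenceClaim} -- each of these is a separate alternating-path argument exploiting the rigidity (c1)--(c3) of Theorem \ref{Ratios}, and none of them is carried out in your plan. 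Moreover, your plan misses a subtlety that forces a further step: $V_1(G)$ depends on an \emph{arbitrary} choice of one degree-two vertex $v_t$ per triangle in $T$, so your key assertion ``every edge of $F_l$ meets $V_1(G)$'' cannot be proved for the pair $F_L,F_l$ you fix at the outset; the paper needs Claim \ref{ChoiceClaim} to re-select extremal matchings adapted to the chosen representatives (first forcing $e_t\in F_l$ for every $t\in T$, then choosing $F_L$ so that $V(F_L\cap F_l)\cup(V(F_l)\backslash V(F_L))$ swallows $V_1(G)$), and only for this adapted pair do the $|X|$ alternating $2$-paths of $(H_L\backslash F_l)\triangle(F_L\backslash F_l)$ avoid $V_1(G)$, i.e.\ lie in $G\backslash V_1(G)$ as condition (3) requires. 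Finally, Corollary \ref{L=2l1-1case} alone cannot yield condition (2): it only excludes two degree-one vertices sharing a neighbour, not a vertex $y\in Y$ adjacent simultaneously to a degree-one vertex and a triangle representative, or to two triangle representatives; the paper derives (2) from Claim \ref{DegreeRequirements}(a) and Claim \ref{IndependenceClaim}(5) combined with the adapted choice of matchings. In short, your proposal is a correct road map whose hard half is missing.
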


\begin{proof}
Sufficiency. Let $G$ be a connected graph with $|V(G)|\geq3$ satisfying the conditions (1)-(3).
Let us show that $L(G)=2l(G)$.

For each vertex $v$ with $d(v)=1$ take the edge incident to it and define $F_1$
as the union of all these edges. For each vertex $v_t\in V_{1}(G)$ take the edge that
connects $v_t$ to a vertex of degree two, and define $F_2$ as the union of all those edges.
Set:
\begin{equation*}
F=F_1\cup F_2.
\end{equation*}

Note that $F$ is a matching with $|F|=|V_{1}(G)|=|Y|$. Moreover, since $G$
is bipartite and $|V_{1}(G)|=|Y|$, the definitions of $F_1$ and $F_2$ imply
that there is no $F$-augmenting path in $G$. Thus, by Berge theorem, $F$ is a maximum
matching of $G$, and
\begin{equation*}
\nu(G)=|F|=|V_{1}(G)|=|Y|.
\end{equation*}Observe that the graph $G\backslash F$ is a bipartite graph with $\nu(G\backslash F)\leq |X|$,
thus
\begin{equation*}
l(G)\leq \nu(G\backslash F)\leq |X|.
\end{equation*}

Now, consider the $|X|$ vertex disjoint $2$-paths of the graph $G\backslash V_{1}(G)$ guaranteed by (3).
(2) implies that these $2$-paths together with the $|F|=|V_{1}(G)|=|Y|$ edges of $F$ form $|X|$ vertex
disjoint $4$-paths of the graph $G$.

Consider matchings $M_{1}$ and $M_{2}$ of $G$ obtained from these $4$-paths by adding the first and the third, the
second and the fourth edges of these $4$-paths to $M_{1}$ and $M_{2}$, respectively. Define:
\begin{equation*}
F'=(F\backslash M_{2})\cup (M_{1}\backslash F).
\end{equation*}

Note that $F'$ is a matching of $G$ and $|F'|=|F|$, thus $F'$ is a maximum
matching of $G$. Since $F'\cap M_2=\emptyset$, we have%
\begin{equation*}
L(G)\geq \nu (G\backslash F')\geq |M_{2}|=2|X|\geq 2l(G).
\end{equation*}

(b) of theorem \ref{Ratios} implies that $L(G)=2l(G)$.

Necessity. Now, assume that $G$ is a connected graph with $|V(G)|\geq 3$ and $L(G)=2l(G)$. By
proving a series of claims, we show that $G\backslash V_1(G)$ satisfies the conditions (1)-(3) of the theorem.

\begin{claim}
\label{SpanningSubgraph} For any maximum matchings $F_{L},F_{l}$ of the
graph $G$ with $\nu (G\backslash F_{L})=L(G),$ $\nu (G\backslash F_{l})=l(G)$,
$F_{L}\cup F_{l}$ induces a spanning subgraph, that is $V(F_{L})\cup
V(F_{l})=V(G)$.
\end{claim}

\begin{proof}
Suppose that there is a vertex $v\in V(G)$ that is covered neither by $F_{L}$
nor by $F_{l}$. Since $F_{L}$ and $F_{l}$ are maximum matchings of $G$, for
each edge $e=(u,v)$ the vertex $u$ is incident to an edge from $F_{L}$ and
to an edge from $F_{l}$.

Case 1: there is an edge $e=(u,v)$ such that $u$ is incident to an edge from
$F_{L}\cap F_{l}$.

Note that $\{e\}\cup (F_{L}\backslash F_{l})$ is a matching of $G\backslash
F_{l}$ which contradicts (c3) of the theorem \ref{Ratios}.

Case 2: for each edge $e=(u,v)$ $u$ is incident to an edge $f_{L}\in
F_{L}\backslash F_{l}$ and to an edge $f_{l}\in F_{l}\backslash F_{L}$.

Let $H_{L}$ be any maximum matching of $G\backslash F_{L}$. Due to (c1) of
theorem \ref{Ratios} $f_{l}\in H_{L}$. Define:
\begin{equation*}
H'_{L}=(H_{L}\backslash \{f_{l}\})\cup \{e\}.
\end{equation*}%

Note that $H'_{L}$ is a maximum matching of $G\backslash F_{L}$
such that $F_{l}\backslash F_{L}$ is not a subset of $H'_{L}$ contradicting (c1) of
theorem \ref{Ratios}.
$\square$
\end{proof}

\begin{claim}
\label{AltComp2paths} For any maximum matchings $F_{L},F_{l}$ of the graph $G
$ with $\nu (G\backslash F_{L})=L(G),$ $\nu (G\backslash F_{l})=l(G)$, the
alternating components $F_{L}\triangle F_{l}$ are $2$-paths.
\end{claim}

\begin{proof}
It suffices to show that there is no edge $f_{L}\in F_{L}$ that is adjacent
to two edges from $F_{l}$. Suppose that some edge $f_{L}\in F_{L}$ is
adjacent to edges $f_{l}^{\prime }$ and $f_{l}^{\prime \prime }$ from $F_{l}$.
Let $H_{L}$ be any maximum matching of $G\backslash F_{L}$. Due to (c1) of
theorem \ref{Ratios} $f'_{l},f''_{l}\in H_{L}$. This implies that $\{f_{L}\}\cup (H_{L}\backslash F_{l})$ is a
matching of $G\backslash F_{l}$ which contradicts (c2) of theorem \ref{Ratios}.
$\square$
\end{proof}

\begin{claim}
\label{DegreeRequirements}For any maximum matchings $F_{L},F_{l}$ of the
graph $G$ with $\nu (G\backslash F_{L})=L(G),$ $\nu (G\backslash F_{l})=l(G)$

\begin{enumerate}
\item[(a)] if $u\in V(F_{l})\backslash V(F_{L})$ then $d(u)=1$ or $d(u)=2$. Moreover, in the latter case, if $v$ and $w$ denote the two neighbours of $u$, where $(u,w)\in F_l$, then $d(w)=2$ and $(v,w)\in F_L$.

\item[(b)] if $u\in V(F_{L})\backslash V(F_{l})$ then $d(u)\geq 2.$

\end{enumerate}
\end{claim}

\begin{proof}
(a) Assume that $u$ is covered by an edge $e_{l}\in F_{l}$ and $u\notin
V(F_{L})$. Suppose that $d(u)\geq 2$, and there is an edge $e=(u,v)$ such
that $e\notin F_{l}$. Taking into account the claim \ref{SpanningSubgraph},
we need only to consider the following four cases:

Case 1: $v\in V(F_{l})\backslash V(F_{L})$.

This is impossible, since $F_{L}$ is a maximum matching.

Case 2: $v$ is covered by an edge $f\in F_{L}\cap F_{l}$;

Let $H_{L}$ be any maximum matching of $G\backslash F_{L}$. Due to (c1) of
theorem \ref{Ratios} $e_{l}\in H_{L}$, thus $e\notin H_{L}$.

Define:
\begin{equation*}
F_{L}^{\prime }=(F_{L}\backslash \{f\})\cup \{e\}.
\end{equation*}%
Note that $F_{L}^{\prime }$ is a maximum matching, and $H_{L}$ is a matching
of $G\backslash F_{L}^{\prime }$. Moreover,
\begin{equation*}
\nu (G\backslash F_{L}^{\prime })\geq \left\vert H_{L}\right\vert =\nu
(G\backslash F_{L})=L(G),
\end{equation*}
thus $H_{L}$ is a maximum matching of $G\backslash F_{L}^{\prime }$ and $\nu
(G\backslash F_{L}^{\prime })=L(G)$. This is a contradiction because $%
F_{L}^{\prime }\triangle F_{l}$ contains a component which is not a $2$%
-path contradicting claim \ref{AltComp2paths}.

Case 3: $v$ is incident to an edge $f_{L}\in F_{L},$ $f_{l}\in F_{l}$ and $%
f_{L}\neq $ $f_{l}$.

Let $H_{L}$ be any maximum matching of $G\backslash F_{L}$. Due to (c1) of
theorem \ref{Ratios}, $e_{l},f_{l}\in H_{L}$. Define:
\begin{equation*}
F_{L}^{\prime }=(F_{L}\backslash \{f_{L}\})\cup \{e\}.
\end{equation*}%
Note that $F_{L}^{\prime }$ is a maximum matching, and $H_{L}$ is a matching
of $G\backslash F_{L}^{\prime }$. Moreover,
\begin{equation*}
\nu (G\backslash F_{L}^{\prime })\geq \left\vert H_{L}\right\vert =\nu
(G\backslash F_{L})=L(G),
\end{equation*}%
thus $H_{L}$ is a maximum matching of $G\backslash F_{L}^{\prime }$ and $\nu
(G\backslash F_{L}^{\prime })=L(G)$. This is a contradiction because $%
F_{L}^{\prime }\triangle F_{l}$ contains a component which is not a $2$%
-path contradicting claim \ref{AltComp2paths}.

Case 4: $v$ is covered by an edge $e_{L}\in F_{L}$ and $v\notin V(F_{l}).$

Note that if $e_{L}$ is not adjacent to $e_{l}$ then claim \ref{AltComp2paths} implies that the edges $e,e_{L}$ and
the edge $\tilde{e}\in F_{l}\backslash F_{L}$ that is adjacent to $e_{L}$
would form an augmenting $3$-path with respect to $F_{L}$, which would
contradict the maximality of $F_{L}$.

Thus it remains to consider the case when $e_{L}$ is adjacent to $e_{l}$ and
$d(u)=2$. Let $w$ be the vertex adjacent to both $e_{l}$ and $e_{L}$.
Let us show that $d(w)=2$. Let $H_{L}$ be any maximum
matching of $G\backslash F_{L}$. Due to (c1) of theorem \ref{Ratios}, $e_{l}\in H_{L}$. Define:
\begin{equation*}
F'_{L}=(F_{L}\backslash \{e_{L}\})\cup \{e\}.
\end{equation*}%
Note that $F'_{L}$ is a maximum matching, and $H_{L}$ is a matching
of $G\backslash F'_{L}$. Moreover,
\begin{equation*}
\nu (G\backslash F'_{L})\geq |H_{L}| =\nu (G\backslash F_{L})=L(G),
\end{equation*}%
thus $H_{L}$ is a maximum matching of $G\backslash F'_{L}$ and $\nu
(G\backslash F'_{L})=L(G)$. If $d(w)\geq 3$ there is a vertex $%
w'\neq u,v$ such that $(w,w')\in E(G)$ and $w'$
satisfies one of the conditions of cases 1,2 and 3 with respect to $%
F'_{L}$ and $F_{l}$. A contradiction. Thus $d(w)=2$.

Clearly, $(v,w)=e_L\in F_L$.

(b) This follows from (a) of claim \ref{DegreeRequirements} and corollary \ref{L=2l1-1case}.
$\square$
\end{proof}

\begin{claim}
\label{IntersectionEdges} Let $F_{L},F_{l}$ be any maximum matchings of the
graph $G$ with $\nu (G\backslash F_{L})=L(G),$ $\nu (G\backslash F_{l})=l(G)$.
Then for any maximum matching $H_{L}$ of the graph $%
G\backslash F_{L}$ there is no edge of $F_{L}\cap F_{l}$ which is adjacent
to two edges from $H_{L}$.
\end{claim}

\begin{proof}
Due to (c3) of theorem \ref{Ratios} any edge from $H_{L}$ that is incident to a vertex covered by an edge of $F_{L}\cap F_{l}$ is also incident to a vertex from $V(F_{L})\backslash
V(F_{l})$. If there were an edge $e\in F_{L}\cap F_{l}$ which is adjacent to
two edges $h_{L},h_{L}^{\prime }\in H_{L}$, then the edges $h_{L},e$ and $h_{L}^{\prime }$ would form an augmenting $3$-path with respect to $F_{l}$, which would contradict the
maximality of $F_{l}$.
$\square$
\end{proof}

\begin{claim}\label{ChoiceClaim}
\begin{enumerate}
    \item [(1)] for any maximum matchings $F_{L},F_{l}$ of the graph $G$ with $\nu
(G\backslash F_{L})=L(G),$ $\nu (G\backslash F_{l})=l(G)$, we have $(V(F_L)\backslash V(F_l))\cap V_1(G)=\emptyset$;
    \item [(2)] there is a maximum matching $F_{l}$ of $G$ with $\nu (G\backslash F_{l})=l(G)$
    and a maximum matching $F_{L}$ of the graph $G$ with $\nu(G\backslash F_{L})=L(G),$
    such that $V_1(G)\subseteq V(F_L\cap F_l)\cup (V(F_l) \backslash V(F_L))$.
\end{enumerate}
\end{claim}

\begin{proof}(1) On the opposite assumption, consider a vertex $x\in (V(F_L)\backslash V(F_l))\cap V_1(G)$. Since
$x\in V_1(G)$ then $d(x)\leq 2$. On the other hand, (b) of claim \ref{DegreeRequirements} implies that $d(x)\geq 2$,
thus $d(x)=2$. Then there are vertices $y,z$ such that $(x,z)\in F_L$, $(z,y)\in F_l$. Note that due to (a) of
claim \ref{DegreeRequirements}, we have $d(y)\leq 2$. Let us show that $d(y)=1$. Suppose that $d(y)=2$. Then due to (a)
of claim \ref{DegreeRequirements}, we have that $d(z)=2$, thus $G$ is the triangle, which is a contradiction, since $G$
does not satisfy $L(G)=2l(G)$.

Thus $d(y)=1$. Since $x\in V_1(G)$, we imply that there is a vertex $w$ with $d(w)=2$ such that $w,x,z$ form a triangle.
Note that $w$ is covered neither by $F_L$ nor by $F_l$, which contradicts claim \ref{SpanningSubgraph}.

(2) Let $e_t$ be an edge of a triangle $t\in T$ connecting the vertex $v_t\in V_1(G)$ to a vertex of degree two. Let us show that there is a maximum matching $F_{l}$ of $G$ with $\nu (G\backslash F_{l})=l(G)$ such that $e_t\in F_l$ for each $t\in T$.

Choose a maximum matching $F_{l}$ of $G$ with $\nu (G\backslash F_{l})=l(G)$ that contains as many edges $e_t$ as possible. Let us show that $F_l$ contains all edges $e_t$. Suppose that there is $t_0\in T$ such that $e_{t_0}\notin F_l$. Define:
\begin{equation*}
F'_{l}=(F_{l}\backslash \{e\})\cup \{e_{t_0}\},
\end{equation*}%
where $e$ is the edge of $F_l$ that is adjacent to $e_{t_0}$. Note that
\begin{equation*}
\nu(G\backslash F'_l)\leq \nu(G\backslash F_l)=l(G),
\end{equation*}%
thus $F'_{l}$ is a maximum matching of $G$ with $\nu (G\backslash F_{l})=l(G)$. Note that $F'_{l}$ contains more edges $e_t$ than does $F_l$ which contradicts the choice of $F_{l}$.

Thus, there is a maximum matching $F_{l}$ of $G$ with $\nu (G\backslash F_{l})=l(G)$ such that $e_t\in F_l$ for all $t\in T$. Now, for this maximum matching $F_{l}$ of $G$ choose a maximum matching $F_{L}$
of the graph $G$ with $\nu(G\backslash F_{L})=L(G),$ such that $V(F_L\cap F_l)\cup (V(F_l) \backslash V(F_L))$ covers
maximum number of vertices from $V_1(G)$. Let us show that $V_1(G)\subseteq V(F_L\cap F_l)\cup (V(F_l) \backslash V(F_L))$.

Suppose that there is a vertex $x\in V_1(G)$ such that $x\notin V(F_L\cap F_l)\cup (V(F_l) \backslash V(F_L))$. Note that due to claim \ref{SpanningSubgraph} and (b) of claim \ref{DegreeRequirements}, any vertex of degree one is either incident to an edge from $F_L\cap F_l$ or to an edge $V(F_l) \backslash V(F_L)$. Thus due to definition of $V_1(G)$, $d(x)=2$ and if $y$ and $z$ denote the two neighbors of $x$, then $d(y)=2$ and $(y,z)\in E(G)$.

Since $x\notin V(F_L\cap F_l)$, we have that $(x,y)\notin F_L$, and since $x\notin (V(F_l) \backslash V(F_L))$, we have that $(y,z)\notin F_L$, thus $(x,z)\in F_L$, as $F_L$ is a maximum matching. Let $H_L$ be any maximum matching of $G\backslash F_L$. As $L(G)=2l(G)$, we have $(x,y)\in H_L$ ((c1) of theorem \ref{Ratios}). Define: 
\begin{equation*}
F'_{L}=(F_{L}\backslash \{(x,z)\})\cup \{(y,z)\}.
\end{equation*}%
Note that $F'_{L}$ is a maximum matching of $G$, $H_L$ is a matching of $G\backslash F_L$, thus
\begin{equation*}
\nu(G\backslash F'_L)\geq |H_L|=\nu(G\backslash F_L)=L(G).
\end{equation*}%
Therefore $F'_{L}$ is a maximum matching of $G$ with $\nu(G\backslash F'_L)=L(G)$. Now, observe that $V(F'_L\cap F_l)\cup (V(F_l) \backslash V(F'_L))$ covers more vertices than does $V(F_L\cap F_l)\cup (V(F_l) \backslash V(F_L))$ which contradicts the choice of $F_L$. The proof of the claim \ref{ChoiceClaim} is completed. 
$\square$
\end{proof}

\begin{claim}\label{IndependenceClaim} For any maximum matchings $F_{L},F_{l}$ of the graph $G$ with $\nu(G\backslash F_{L})=L(G),$ $\nu (G\backslash F_{l})=l(G)$, we have
\begin{enumerate}
    \item [(1)] $V(F_L)\backslash V(F_l)$ is an independent set;
    \item [(2)] no edge of $G$ connects two vertices that are covered by both $F_L\backslash F_l$ and $F_l\backslash F_L$;
    \item [(3)] no edge of $G$ is adjacent to two different edges from $F_L\cap F_l$;
    \item [(4)] no edge of $G$ connects a vertex covered by $F_L\cap F_l$ to a vertex covered by both $F_L\backslash F_l$ and $F_l\backslash F_L$;
    \item [(5)] if $(u,v)\in F_L\cap F_l$ then either $u\in V_1(G)$ or $v\in V_1(G)$.
\end{enumerate}
\end{claim}

\begin{proof}(1)There is no edge of $G$ connecting two vertices from $%
V(F_{L})\backslash V(F_{l})$ since $F_{l}$ is a maximum matching.

(2) follows from (c1) and (c2) of theorem \ref{Ratios}.

(3) follows from (c3) of theorem \ref{Ratios}.

(4) Suppose that there is an edge $e=(y_{1},y_{2})$, such that $y_1$ is covered by an edge $(z,y_1)\in F_L\cap F_l$ and $y_2$ is covered by both $F_L\backslash F_l$ and $F_l\backslash F_L$. Consider a maximum matching $H_{L}$ of the graph $G\backslash F_{L}.$ Note that $y_{1}$ must be incident to an edge from $H_{L}$, as otherwise we could replace the edge of $H_{L}$ that is adjacent to $e$ and belongs also to $F_{l}\backslash F_{L}$ ((c1) of
theorem \ref{Ratios}) by the edge $e$ to obtain a new maximum
matching $H_{L}^{\prime }$ of the graph $G\backslash F_{L}$ which would not
satisfy (c1) of theorem \ref{Ratios}.

So let $y_{1}$\ be incident to an edge $h_{L}\in H_{L}$, which connects $y_{1}$ with a vertex $x\in V(F_{L})\backslash V(F_{l})$. Note that due to claim \ref{IntersectionEdges}, $z$ is not incident to an edge from $H_L$. Now, let $x_1$ be a vertex such that $(x,x_1)\in F_{L}\backslash F_l$ (such a vertex exists since $x\in V(F_{L})\backslash V(F_{l})$). As $F_l$ is a maximum matching, $x_1$ is incident to an edge $(x_1,x_2)\in F_l\backslash F_L$. By (c1) of theorem \ref{Ratios}, $(x_1,x_2)\in H_L$. Moreover, by claim \ref{AltComp2paths}, $x_2$ is not adjacent to an edge from $F_L$. Thus the edges $(z, y_1), (y_1, x), (x,x_1)$ and $(x_1,x_2)$ form an $F_{L}-H_{L}$ alternating $4$-path $P$. Define:
\begin{eqnarray*}
F_{L}^{\prime } &=&(F_{L}\backslash E(P))\cup (H_{L}\cap E(P)), \\
H_{L}^{\prime } &=&(H_{L}\backslash E(P))\cup (F_{L}\cap E(P)).
\end{eqnarray*}%
Note that $F_{L}^{\prime }$ is a maximum matching of $G$, $H_{L}^{\prime }$
is a matching of $G\backslash F_{L}^{\prime }$ of cardinality $\left\vert
H_{L}\right\vert $, and
\begin{equation*}
\nu (G\backslash F_{L}^{\prime })\geq \left\vert H_{L}^{\prime }\right\vert
=\left\vert H_{L}\right\vert =\nu (G\backslash F_{L})=L(G),
\end{equation*}
thus $H_{L}^{\prime }$ is a maximum matching of $G\backslash F_{L}^{\prime }$
and $\nu (G\backslash F_{L}^{\prime })=L(G)$. This is a contradiction since
the edge $e$ connects two vertices which are covered by $F_{L}^{\prime
}\backslash F_{l}$ and $F_{l}\backslash F_{L}^{\prime }$ ((2) of claim \ref{IndependenceClaim}).

(5)Suppose that $e=(u,v)\in F_{L}\cap F_{l}$. Since $G$ is
connected and $\left\vert V\right\vert \geq 3$, we, without loss of
generality, may assume that $d(v)\geq 2$, and there is $w\in V(G),w\neq u$
such that $(w,v)\in E(G)$. Consider a maximum matching $H_{L}$ of the graph $%
G\backslash F_{L}.$ Note that, without loss of generality, we can assume that $v$ is incident to an edge from $H_{L}$,
as otherwise we could replace the edge of $H_{L}$ that is incident to $w$ ($%
H_{L}$ is a maximum matching of $G\backslash F_{L}$) by the edge $(w,v)$ to
obtain a new maximum matching $H_{L}^{\prime }$ of the graph $G\backslash
F_{L}$ such that $v$\ is incident to an edge from $H_{L}^{\prime }$.

So we can assume that there is an edge $(v,q)\in H_{L}$, $q\neq u$. Note
that due to claim \ref{IntersectionEdges}, $u$ is not incident to an edge from $H_L$. (c3) of theorem \ref{Ratios} implies that $q$ is incident to an edge from $(q,q_1)\in F_L \backslash F_{l}$. As $F_l$ is a maximum matching, $q_1$ is incident to an edge $(q_1,q_2)\in F_l\backslash F_L$. By (c1) of theorem \ref{Ratios}, $(q_1,q_2)\in H_L$. Moreover, by claim \ref{AltComp2paths}, $q_2$ is not adjacent to an edge from $F_L$. Thus the edges $(u, v), (v, q), (q,q_1)$ and $(q_1,q_2)$ form an $F_{L}-H_{L}$ alternating $4$-path $P$. Define:
\begin{eqnarray*}
F_{L}^{\prime } &=&(F_{L}\backslash E(P))\cup (H_{L}\cap E(P)), \\
H_{L}^{\prime } &=&(H_{L}\backslash E(P))\cup (F_{L}\cap E(P)).
\end{eqnarray*}%
Note that $F_{L}^{\prime }$ is a maximum matching of $G$, $H_{L}^{\prime }$
is a matching of $G\backslash F_{L}^{\prime }$ of cardinality $\left\vert
H_{L}\right\vert $, and
\begin{equation*}
\nu (G\backslash F_{L}^{\prime })\geq \left\vert H_{L}^{\prime }\right\vert
=\left\vert H_{L}\right\vert =\nu (G\backslash F_{L})=L(G),
\end{equation*}%
thus $H_{L}^{\prime }$ is a maximum matching of $G\backslash F_{L}^{\prime }$
and $\nu (G\backslash F_{L}^{\prime })=L(G)$. Since $u\in V(F_{l})\backslash
V(F_{L}^{\prime })$ (a) of claim \ref{DegreeRequirements} implies that either $%
d(u)=1$ and therefore $u\in V_1(G)$, or $d(u)=d(v)=2$ and therefore either $u\in V_1(G)$ or $v\in V_1(G)$. Proof of the claim \ref{IndependenceClaim} is completed.
$\square$
\end{proof}

We are ready to complete the proof of the theorem. Take any maximum matchings $F_{L},F_{l}$ of the graph $G$ guaranteed by the (2) of claim \ref{ChoiceClaim} and consider the following partition of $V(G\backslash V_1(G))=V(G)\backslash V_1(G)$:
\begin{eqnarray*}
X =X(F_{L},F_{l})=V(F_{L})\backslash V(F_{l}), Y =Y(F_{L},F_{l})=V(G)\backslash (V_1(G)\cup X).
\end{eqnarray*}

Claim \ref{IndependenceClaim} implies that $X$ and $Y$ are independent sets
of vertices of $G\backslash V_1(G)$, thus $G\backslash V_1(G)$ is a bipartite graph with a bipartition $(X,Y)$.

The choice of maximum matchings $F_{L},F_{l}$, (a) of claim \ref{DegreeRequirements}, (5) of claim \ref{IndependenceClaim} and the definition of the set $Y$ imply (2) of the theorem \ref{Characterization}.

Let us show that it satisfies (3), too.

Consider the alternating $2$-paths of
\begin{equation*}
(H_{L}\backslash F_{l})\triangle (F_{L}\backslash F_{l}).
\end{equation*}%
(c2), (c3) of theorem \ref{Ratios} and the definition of the set $X$
imply that there are $|X|$ such $2$-paths. Moreover, these $2$-paths are in fact $2$-paths of the graph $G\backslash V_{1}(G) $. Thus $G$ satisfies (3) of the theorem. The proof of the theorem 
\ref{Characterization} is completed.
$\square$
\end{proof}

\begin{corollary}
\label{PolynomialAlgorithm} The property of a graph $L(G)=2l(G)$ can be
tested in polynomial time.
\end{corollary}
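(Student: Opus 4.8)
The plan is to convert the structural criterion of Theorem \ref{Characterization} into an explicit polynomial-time test. The first step is a reduction to connected graphs: if $C_1,\dots,C_m$ are the connected components of $G$, then every maximum matching of $G$ restricts to a maximum matching of each $C_i$, so $L(G)=\sum_i L(C_i)$ and $l(G)=\sum_i l(C_i)$. Because $L(C_i)\le 2l(C_i)$ for each $i$ by (b) of Theorem \ref{Ratios}, the equality $L(G)=2l(G)$ holds if and only if $L(C_i)=2l(C_i)$ for \emph{every} component. A connected component with at most two vertices is $K_1$ or $K_2$ and has $L=l=0$, hence always satisfies the relation and may be ignored; for each component $C$ with $|V(C)|\ge 3$ it then remains to verify conditions (1)--(3) of Theorem \ref{Characterization} in time polynomial in $|V(C)|$.

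The set $V_1(C)$ is computable in polynomial time by listing the degree-one vertices and enumerating all triangles to build $T$ and select one degree-two vertex from each; any admissible choice works, since the value of $L(C)=2l(C)$ does not depend on it. Writing $C'=C\setminus V_1(C)$, I would first test that $C'$ is bipartite by a standard $2$-colouring, which also produces, within each connected component $D$ of $C'$, its two colour classes up to a swap. The key to removing this remaining ambiguity, and the heart of condition (3), is a counting observation: in a bipartite graph with parts $X$ and $Y$, a $2$-path uses either one vertex of $X$ and two of $Y$, or two of $X$ and one of $Y$, so if we demand $|X|$ pairwise vertex-disjoint $2$-paths, then writing $a,b$ for the number of each type we get $a+2b\le |X|=a+b$, forcing $b=0$. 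Thus all $|X|$ paths are centred in $X$, every vertex of $X$ is the centre of exactly one of them, and within each component one needs $2\,|X\cap D|\le |Y\cap D|$. Since the two colour classes of $D$ cannot both satisfy this inequality (that would require $2a\le b$ and $2b\le a$), the assignment of classes to $X$ and $Y$ is forced in each component, or the component admits no valid assignment and then $L(C)<2l(C)$. In particular the global bipartition $(X,Y)$ is fixed with no search.

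With $(X,Y)$ determined, condition (2) is the direct check that $|Y|=|V_1(C)|$ and that every $y\in Y$ has exactly one neighbour in $V_1(C)$, both immediate. Condition (3) is, by the counting observation, equivalent to the existence of a subgraph of $C'$ in which every vertex of $X$ has degree exactly two and every vertex of $Y$ has degree at most one---a bipartite degree-constrained subgraph, decidable in polynomial time by a single maximum-flow computation, or, after splitting each $x\in X$ into two copies, by ordinary bipartite matching. This packing step is the one genuinely non-obvious ingredient and the main obstacle; once it is recast as a flow problem the rest is routine. Running these polynomial tests on every component and combining the outcomes through the additivity established in the first step yields the desired polynomial-time algorithm for deciding $L(G)=2l(G)$.
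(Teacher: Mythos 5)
Your proposal is correct and follows essentially the same route as the paper: reduce to connected components by additivity, dismiss components with at most two vertices, compute $V_1(G)$, verify conditions (1)--(2) of Theorem \ref{Characterization} directly, and decide condition (3) by exactly the max-flow (degree-constrained subgraph) formulation the paper uses. Your counting argument forcing all $2$-paths to be centred in $X$ and thereby fixing the orientation of the bipartition in each component of $G\backslash V_1(G)$ is a welcome refinement, since the paper tacitly works with a single given bipartition $(X,Y)$ and does not address this ambiguity.
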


\begin{proof}
First of all note that the property $L(G)=2l(G)$ is additive, that is, a
graph satisfies this property if and only if all its connected components
do. Thus we can concentrate only on connected graphs.

All connected graphs with $|V(G)|\leq 2$ satisfy the equality $L(G)=2l(G)$, thus we can assume that $|V(G)|\geq 3$.

Next, we construct a set $V_1(G)$, which can be done in linear time. Now, we need to check whether the graph $G\backslash V_1(G)$ satisfies the conditions (1)-(3) of the theorem \ref{Characterization}.

It is well-known that the properties (1) and (2) can be checked in polynomial time, so we will
consider only the testing of (3).

From a graph $G\backslash V_{1}(G)$ with a bipartition $(X,Y)$ we
construct a network $\vec{G}$ with new vertices $s$ and $t$. The arcs of $%
\vec{G}$ are defined as follows:

\begin{itemize}
\item connect $s$ to every vertex of $X$ with an arc of
capacity $2$;

\item connect every vertex of $Y$ to $t$ by an arc of capacity $1$;

\item for every edge $(x,y)\in E(G)$, $x\in X$, $y\in Y$ add
an arc connecting the vertex $x$ to the vertex $y$ which has capacity $1$.
\end{itemize}

Note that

\begin{itemize}
\item the value of the maximum $s-t$ flow in $\vec{G}$ is no more than $%
2\left\vert X\right\vert $ (the capacity of the cut $(S,\bar{%
S})$, where $S=\{s\}$, $\bar{S}=V(\vec{G})\backslash S$, is $2\left\vert
X\right\vert $);

\item the value of the maximum $s-t$ flow in $\vec{G}$ is $2\left\vert
X\right\vert $ if and only if the graph $G\backslash V_{1}(G)$ contains $%
\left\vert X\right\vert $ vertex disjoint $2$-paths,
\end{itemize}

thus (3) also can be tested in polynomial time.
$\square$
\end{proof}

\begin{remark}
Recently Monnot and Toulouse in \cite{Monnot} proved that $2$-path partition
problem remains $NP$-complete even for bipartite graphs of maximum degree
three. Fortunately, in theorem \ref{Characterization} we are dealing with a
special case of this problem which enables us to present a polynomial
algorithm in corollary \ref{PolynomialAlgorithm}.
\end{remark}

\section{$NP$-completeness of testing $L(G)=\frac{3}{2}l(G)$ in the class of bridgeless cubic graphs}

The reader may think that a result analogous to corollary \ref{PolynomialAlgorithm}
can be proved for the property $L(G)=\frac{3}{2}l(G)$ in
the class of graphs containing a perfect matching. Unfortunately this fails already in the class of
bridgeless cubic graphs, which by the well-known theorem of Petersen are known to possess a perfect matching (see theorem 3.4.1 of \cite{Lov}).

\begin{theorem}
It is $NP$-complete to test the property $L(G)=\frac{3}{2}l(G)$ in the class
of bridgeless cubic graphs.
\end{theorem}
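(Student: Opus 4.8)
The plan is to translate $L$ and $l$ into a statement about $2$-factors and then to reduce from edge colouring. Because $G$ is cubic and, by Petersen's theorem, has a perfect matching, its maximum matchings are exactly its perfect matchings, and for such a matching $F$ the graph $G\backslash F$ is a $2$-factor; a $2$-factor with $o$ odd cycles satisfies $\nu(G\backslash F)=(n-o)/2$, since a cycle of length $c$ contributes $\lfloor c/2\rfloor$ to the matching number, where $n=|V(G)|$. Consequently $L(G)=(n-o_{\min})/2$ and $l(G)=(n-o_{\max})/2$, where $o_{\min}$ and $o_{\max}$ denote the least and greatest number of odd cycles over all $2$-factors of $G$. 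I would first record the trivial bounds $o_{\max}\leq n/3$ (every odd cycle has at least three vertices) and $o_{\min}\geq 0$; these reprove $L(G)\leq\frac32 l(G)$ in the cubic case (cf. part (d) of Theorem \ref{Ratios}) and, more importantly, show that the extremal identity is rigid.

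Indeed, the two formulas give that $L(G)=\frac32 l(G)$ is equivalent to $3o_{\max}-2o_{\min}=n$, and the bounds $o_{\max}\leq n/3$, $o_{\min}\geq 0$ force $o_{\max}=n/3$ and $o_{\min}=0$. Thus $L(G)=\frac32 l(G)$ holds if and only if $G$ possesses simultaneously (i) a $2$-factor all of whose cycles are triangles, and (ii) a $2$-factor all of whose cycles are even. The next step is the standard observation that a cubic graph has a $2$-factor with only even cycles if and only if it is $3$-edge-colourable: two of the three colour classes form such a $2$-factor, and conversely one may two-colour the even cycles and give the complementary perfect matching the third colour. So condition (ii) is precisely $3$-edge-colourability, whereas condition (i) asks for an all-triangle $2$-factor.

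Membership in $NP$ is then immediate: a certificate consists of one triangle $2$-factor and one even $2$-factor. Exhibiting a triangle $2$-factor certifies $l(G)=n/3$ (since $l(G)\geq n/3$ always), exhibiting an even $2$-factor certifies $L(G)=n/2$ (since $L(G)\leq n/2$ always), and together these give $L(G)=\frac32 l(G)$; all checks are polynomial. For $NP$-hardness I would reduce from $3$-edge-colourability of bridgeless cubic graphs, which is $NP$-complete (Holyer); note that a cubic graph with a bridge is never $3$-edge-colourable, so the hardness already resides in the bridgeless class. Given a bridgeless cubic graph $H$, form $H^{\triangle}$ by replacing each vertex with a triangle and attaching the three former edges of $H$ at that vertex to the three vertices of the triangle. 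Then $H^{\triangle}$ is again bridgeless and cubic, and the inserted triangles form a $2$-factor, so condition (i) holds for every $H^{\triangle}$ automatically.

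It remains to verify that $H^{\triangle}$ is $3$-edge-colourable if and only if $H$ is: at each inserted triangle the three triangle edges must receive three distinct colours, which forces the three attached edges to receive three distinct colours as well, so any colouring of $H^{\triangle}$ restricts to a colouring of $H$, and the correspondence is reversible. Hence $L(H^{\triangle})=\frac32 l(H^{\triangle})$ if and only if $H^{\triangle}$ has an even $2$-factor if and only if $H$ is $3$-edge-colourable; since $H^{\triangle}$ is constructed in polynomial time, the theorem follows. The main obstacle is this last step, namely checking that the vertex-to-triangle gadget preserves $3$-edge-colourability in both directions while keeping the graph simple and bridgeless, together with invoking the $NP$-completeness of $3$-edge-colouring on the bridgeless cubic class; by contrast, the reduction of the ratio $L/l$ to the two $2$-factor conditions is the conceptual heart, since the gadget then supplies one of the two conditions for free.
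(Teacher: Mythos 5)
Your proof is correct and follows essentially the same route as the paper: translate $L$ and $l$ for cubic graphs into the minimum and maximum number of odd cycles over $2$-factors, observe that $L(G)=\frac{3}{2}l(G)$ forces both a triangle $2$-factor and an even $2$-factor (the latter being equivalent to $3$-edge-colorability), and reduce from Holyer's problem via the same vertex-to-triangle expansion. One minor refinement on your side: your $NP$-membership certificate (a triangle $2$-factor together with an even $2$-factor, checked against the universal bounds $l(G)\geq n/3$ and $L(G)\leq n/2$) is more rigorously verifiable than the paper's certificate of two optimal matchings, whose optimality is not obviously checkable in polynomial time.
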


\begin{proof}
Clearly, the problem of testing the property $L(G)=\frac{3}{2}l(G)$ for
graphs containing a perfect matching is in $NP$, since if we are given
perfect matchings $F_{L},F_{l}$ of the graph $G$ with $\nu (G\backslash
F_{L})=L(G),$ $\nu (G\backslash F_{l})=l(G)$ then we can calculate $L(G)$
and $l(G)$ in polynomial time.

We will use the well-known $3$-edge-coloring problem (\cite{Holyer}) to
establish the NP-completeness of our problem.

Let $G$ be a bridgeless cubic graph. Consider a bridgeless cubic graph $%
G_{\bigtriangleup }$ obtained from $G$ by replacing every vertex of $G$ by a
triangle. We claim that $G$ is $3$-edge-colorable if and only if $%
L(G_{\bigtriangleup })=\frac{3}{2}l(G_{\bigtriangleup })$.

Suppose that $G$ is $3$-edge-colorable. Then $G_{\bigtriangleup }$
is also $3 $-edge-colorable, which means that $G_{\bigtriangleup }$
contains two edge disjoint perfect matchings $F$ and $F^{\prime }$.
This implies that
\begin{equation*}
L(G_{\bigtriangleup })\geq \nu (G_{\bigtriangleup }\backslash F)\geq
\left\vert F^{\prime }\right\vert =\frac{\left\vert V(G_{\bigtriangleup
})\right\vert }{2},
\end{equation*}%
On the other hand, the set $E(G)$ forms a perfect matching of $%
G_{\bigtriangleup }$, and
\begin{equation*}
l(G_{\bigtriangleup })\leq \nu (G_{\bigtriangleup }\backslash E(G))=\frac{%
\left\vert V(G_{\bigtriangleup })\right\vert }{3},
\end{equation*}%
since every component of $G_{\bigtriangleup }\backslash E(G)$ is a triangle.
Thus:
\begin{equation*}
\frac{L(G_{\bigtriangleup })}{l(G_{\bigtriangleup })}\geq \frac{3}{2},
\end{equation*}%
(d) of theorem \ref{Ratios} implies that $\frac{L(G_{\bigtriangleup })}{%
l(G_{\bigtriangleup })}=\frac{3}{2}$.

Now assume that $\frac{L(G_{\bigtriangleup })}{l(G_{\bigtriangleup })}=\frac{%
3}{2}$. Note that for every perfect matching $F$ of the graph $%
G_{\bigtriangleup }$ the graph $G_{\bigtriangleup }\backslash F$ is a
2-factor, therefore
\begin{eqnarray*}
L(G_{\bigtriangleup }) &=&\frac{\left\vert V(G_{\bigtriangleup })\right\vert
-w(G_{\bigtriangleup })}{2}, \\
l(G_{\bigtriangleup }) &=&\frac{\left\vert V(G_{\bigtriangleup })\right\vert
-W(G_{\bigtriangleup })}{2}
\end{eqnarray*}%
where $w(G_{\bigtriangleup })$ and $W(G_{\bigtriangleup })$ denote the
minimum and maximum number of odd cycles in a $2$-factor of $%
G_{\bigtriangleup }$, respectively. Since $\frac{L(G_{\bigtriangleup })}{%
l(G_{\bigtriangleup })}=\frac{3}{2}$ we have
\begin{equation*}
W(G_{\bigtriangleup })=\frac{\left\vert V(G_{\bigtriangleup })\right\vert
+2w(G_{\bigtriangleup })}{3}.
\end{equation*}%
Taking into account that $W(G_{\bigtriangleup })\leq \frac{\left\vert
V(G_{\bigtriangleup })\right\vert }{3}$, we have:
\begin{eqnarray*}
W(G_{\bigtriangleup }) &=&\frac{\left\vert V(G_{\bigtriangleup })\right\vert
}{3}, \\
w(G_{\bigtriangleup }) &=&0.
\end{eqnarray*}%
Note that $w(G_{\bigtriangleup })=0$ means that $G_{\bigtriangleup }$ is $3$%
-edge-colorable, which in its turn implies that $G$ is $3$-edge-colorable.
The proof of the theorem is completed.
$\square$
\end{proof}

\end{document}